\newtheorem{theorem}{Theorem}
\newtheorem{lemma}{Lemma}
\newcommand{\mathsym}[1]{{}}
\newcommand{\unicode}[1]{{}}
\begin{document}
%
\title{Unique Option Pricing Measure With Neither Dynamic Hedging nor Complete Markets}

\author{
    \IEEEauthorblockN{Nassim Nicholas Taleb\IEEEauthorrefmark{1}\IEEEauthorrefmark{2}}
    \IEEEauthorblockA{\IEEEauthorrefmark{1}School of Engineering, NYU, \& Former Option Trader\\}
}

\maketitle 

\begin{abstract} Proof that under simple assumptions, such as constraints of Put-Call Parity, the probability measure for the valuation of a European option has the mean derived from the forward price which can, but does not have to be the risk-neutral one, under any general probability distribution, bypassing the Black-Scholes-Merton dynamic hedging argument, and without the requirement of complete markets and other strong assumptions. We confirm that the heuristics used by traders for centuries are both more robust, more consistent, and more rigorous than held in the economics literature. We also show that options can be priced using infinite variance (finite mean) distributions. 

\end{abstract}
\thispagestyle{fancy}
\markboth{\textbf{Extreme Risk Initiative ---NYU School of Engineering Working Paper Series}}
\maketitle
\section{Background}
Option valuations methodologies have been used by traders for centuries, in an effective way (Haug and Taleb, 2010). In addition, valuations by expectation of terminal payoff forces the mean of the probability distribution used for option prices be be that of the forward, thanks to Put-Call Parity and, should the forward be risk-neutrally priced, so will the option be. The Black Scholes argument (Black and Scholes, 1973, Merton, 1973) is held to allow  risk-neutral option pricing thanks to dynamic hedging, as the option becomes redundant (since its payoff can be built as a linear combination of cash and the underlying asset dynamically revised through time). This is a puzzle, since: 1) Dynamic Hedging is not operationally feasible in financial markets owing to the dominance of portfolio changes resulting from jumps, 2) The dynamic hedging argument doesn't stand mathematically under fat tails; it requires a very specific "Black Scholes world" with many impossible assumptions, one of which requires finite quadratic variations, 3) Traders use the same Black-Scholes "risk neutral argument" for the valuation of options on assets that do not allow dynamic replication, 4) Traders trade options consistently in domain where the risk-neutral arguments do not apply 5) There are fundamental informational limits preventing the convergence of the stochastic integral.\footnote{Further, in a case of scientific puzzle, the exact formula called "Black-Scholes-Merton" was written down (and used) by Edward Thorp in a heuristic derivation by expectation that did not require dynamic hedging, see Thorpe(1973).}

There have been a couple of predecessors to the present thesis that Put-Call parity is sufficient constraint to enforce some structure at the level of the mean of the underlying distribution, such as Derman and Taleb (2005), Haug and Taleb (2010). These approaches were heuristic, robust though deemed hand-waving (Ruffino and Treussard, 2006). In addition they showed that operators need to use the risk-neutral mean. What this paper does is  \begin{itemize}
  \item It goes beyond the "handwaving" with formal proofs.
   \item It uses a completely distribution-free, expectation-based approach and proves the risk-neutral argument without dynamic hedging, and without any distributional assumption. 
  \item Beyond risk-neutrality, it establishes the case of a unique pricing distribution for option prices in the absence of such argument. The forward (or future) price can embed expectations and deviate from the arbitrage price (owing to, say, regulatory or other limitations) yet the options can still be priced at a distibution corresponding to the mean of such a forward.
  \item It shows how one can \textit{practically} have an option market without "completeness" and without having the theorems of financial economics hold.
 
  \end{itemize}

These are done with solely two constraints: "horizontal", i.e. put-call parity, and "vertical", i.e. the different valuations across strike prices deliver a probability measure which is shown to be unique.  The only economic assumption made here is that the forward exits,  is tradable --- in the absence of such unique forward price it is futile to discuss standard option pricing. We also require the probability measures to correspond to distributions with finite first moment.

Preceding works in that direction are as follows. Breeden and Litzenberger (1978) and Dupire(1994), show how option spreads deliver a unique probability measure; there are papers establishing broader set of arbitrage relations between options such as Carr and Madan (2001)\footnote{See also Green and Jarrow (1987) and  Nachman(1988). We have known about the possibility of risk neutral pricing without dynamic hedging since Harrison and Kreps (1979) but the theory necessitates extremely strong --and severely unrealistic --assumptions, such as strictly complete markets and a multiperiod pricing kernel}.

However 1) none of these papers made the bridge between calls and puts via the forward, thus translating the relationships from arbitrage relations between options delivering a probability distribution into the necessity of lining up to the mean of the distribution of the forward, hence the risk-neutral one (in case the forward is arbitraged.) 2) Nor did any paper show that in the absence of second moment (say, infinite variance), we can price options very easily. Our methodology and proofs make no use of the variance. 3) Our method is vastly simpler, more direct, and robust to changes in assumptions.

We make no assumption of general market completeness. Options are not redundant securities and remain so. Table 1 summarizes the gist of the paper.\footnote{The famed Hakkanson paradox is as follows: if markets are complete and options are redudant, why would someone need them? If markets are incomplete, we may need options but how can we price them? This discussion may have provided a solution to the paradox: markets are incomplete \textit{and} we can price options.} \footnote{Option prices are not unique in the absolute sense: the premium over intrinsic can take an entire spectrum of values; it is just that the put-call parity constraints forces the measures used for puts and the calls to be the same and to have the same expectation as the forward. As far as securities go, options are securities on their own; they just have a strong link to the forward.} 
\begin{table}
\caption{Main practical differences between the dynamic hedging argument and the static Put-Call parity with speading across strikes.}
    \begin{tabular}{ | p{2. cm}| p{2.6cm}|p{2.6cm}|}
    \hline
    & & \\
    \textbf{}  & \textbf{Black-Scholes Merton} & \textbf{Put-Call Parity with Spreading}\\ 
\hline 
& & \\
\textbf{Type } & Continuous rebalancing.  & Interpolative static hedge.\\
\hline
& &  \\

\textbf{Market Assumptions} & 1) Continuous Markets, no gaps, no jumps.  & 1) Gaps and jumps acceptable. Continuous Strikes, or acceptable number of strikes.\\
& &  \\

 & 2) Ability to borrow and lend underlying asset for all dates. & 2) Ability to borrow and lend underlying asset for single forward date.\\ 
 
& &  \\
& 3) No transaction costs in trading asset. & 3) Low transaction costs in trading options. \\
& & \\

\hline
& & \\
\textbf{Probability Distribution} & Requires all moments to be finite. Excludes the class of slowly varying distributions & Requires finite $1^{st}$ moment (infinite variance is acceptable).\\ 
\hline
& &  \\

\textbf{Market Completeness} & Achieved through dynamic completeness & Not required (in the traditional sense)\\ 
\hline

& & \\
\textbf{Realism of Assumptions} & Low  & High\\ 
\hline

& & \\
\textbf{Convergence} & In probability (uncertain; one large jump changes expectation)  & Pointwise\\ 
\hline

\hline
\textbf{Fitness to Reality} & Only used after "fudging" standard deviations per strike.  & Portmanteau, using specific distribution adapted to reality\\ 
& &  \\
\hline

    \end{tabular}

    \end{table}

\section{Proof}
Define $C(S_{t_0},K,t)$ and $P(S_{t_0},K,t)$ as European-style  call and put with strike price K, respectively, with expiration $t$, and $S_0$ as an underlying security at times $t_0$, $t \geq t_0$, and $S_t$ the possible value of the underlying security at time t. 

\subsection{Case 1: Forward as risk-neutral measure}

Define $r = \frac{1}{t-t_0}\int_{t_0}^t r_s \mathrm{d}s$, the return of a risk-free money market fund and $\delta =\frac{1}{t-t_0}\int_{t_0}^t \delta_s \mathrm{d}s$ the payout of the asset (continuous dividend for a stock, foreign interest for a currency).

We have the arbitrage forward price $F_t^Q$:
\begin{equation}
 F_t^Q = S_0\frac{(1+r)^{(t-t_0)}}{(1+\delta)^{(t-t_0)}} \thickapprox S_0 \, e^{(r-\delta) (t-t_0)}
\end{equation}
by arbitrage, see Keynes 1924. We thus call $F_t^Q$ the future (or forward) price obtained by arbitrage, at the risk-neutral rate. Let $F_t^P$ be the future requiring a risk-associated "expected return" $m$, with expected forward price:
\begin{equation}
 F_t^P = S_0 (1+m)^{(t-t_0)}\thickapprox S_0 \, e^{m \, (t-t_0)}.
\end{equation}

\noindent
\textbf{Remark:  }\textit{ By arbitrage, all tradable values of the forward price given $S_{t_0}$ need to be equal to $F_t^Q$.}

"Tradable" here does not mean "traded", only subject to arbitrage replication by "cash and carry", that is, borrowing cash and owning the secutity yielding $d$ if the embedded forward return diverges from $r$.

\subsection{Derivations}

In the following we take $F$ as having dynamics on its own --irrelevant to whether we are in case 1 or 2 --hence a unique probability measure $Q$.

Define $\Omega=[0,\infty)=A_K \cup A_K^c $
where 
$A_K=[0,K]$ and $A_K^c=(K, \infty)$.

Consider a class of standard (simplified) probability spaces $(\Omega,\mu_i)$ indexed by $i$, where 
$\mu_i$ is a probability measure, i.e., satisfying $\int_\Omega \mathrm{d} \mu_i=1$.

\begin{theorem}
For a given maturity T, there is a unique measure $\mu_Q$ that prices European puts and calls by expectation of terminal payoff. 
\end{theorem}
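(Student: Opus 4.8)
The plan is to set up the two pricing functionals by expectation of terminal payoff under a candidate measure $\mu$, impose the two advertised constraints (horizontal put--call parity and vertical spreading across strikes), and show that together they leave no freedom in $\mu$. First I would write, with discount factor $e^{-r(t-t_0)}$,
\begin{equation}
C(S_{t_0},K,t) = e^{-r(t-t_0)}\!\int_{A_K^c}(S_t-K)\,\mathrm{d}\mu(S_t),\qquad P(S_{t_0},K,t) = e^{-r(t-t_0)}\!\int_{A_K}(K-S_t)\,\mathrm{d}\mu(S_t),
\end{equation}
so that the payoffs $(S_t-K)^+$ and $(K-S_t)^+$ are integrated against $\mu$ over the complementary events $A_K^c$ and $A_K$. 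The finite-first-moment hypothesis is exactly what guarantees these integrals converge, and no appeal to a second moment enters anywhere.

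Next I would extract the horizontal constraint. Subtracting the two expressions and using $\Omega=A_K\cup A_K^c$ gives $C-P=e^{-r(t-t_0)}\bigl(\int_\Omega S_t\,\mathrm{d}\mu - K\bigr)=e^{-r(t-t_0)}\bigl(E_\mu[S_t]-K\bigr)$. Matching this against the no-arbitrage put--call parity relation $C-P=e^{-r(t-t_0)}(F-K)$ forces $E_\mu[S_t]=F$. Thus every admissible pricing measure must have its mean pinned to the tradable forward, which by the Remark equals $F_t^Q$ whenever the forward is arbitraged; this is the step that converts an option-arbitrage statement into a statement about the mean of the underlying law.

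I would then obtain uniqueness from the vertical constraint by differentiating the call functional in the strike. A single differentiation yields $\partial_K C = -e^{-r(t-t_0)}\,\mu(A_K^c)=-e^{-r(t-t_0)}\bigl(1-\Phi_\mu(K)\bigr)$, where $\Phi_\mu$ is the cumulative distribution function of $S_t$ under $\mu$; a second differentiation recovers the law itself (its density where one exists, and otherwise $\mu$ through its jumps). Since the left-hand side is fixed by the observed continuum of call prices across strikes, the entire cumulative distribution $\Phi_\mu$ --- and hence $\mu$ --- is determined, so two measures reproducing all call prices must coincide. Combined with the mean constraint this identifies the unique $\mu_Q$.

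The main obstacle is making the differentiation step rigorous in the fully distribution-free setting the paper advertises: the candidate measure need not be absolutely continuous, so I would justify $\partial_K C = -e^{-r(t-t_0)}(1-\Phi_\mu(K))$ via the Fubini/Tonelli identity $\int_K^\infty (S_t-K)\,\mathrm{d}\mu = \int_K^\infty \mu(A_s^c)\,\mathrm{d}s$ followed by the fundamental theorem of calculus, rather than by naive differentiation under the integral sign, and work with $\Phi_\mu$ throughout so that atoms and infinite variance are handled automatically. The remaining care is to confirm that the recovered $\mu_Q$ is a bona fide probability measure --- nonnegativity from the convexity of $C$ in $K$, and total mass one from the boundary behavior $\mu(A_0^c)=1$ read off as $K\to 0$ --- which closes both existence and uniqueness.
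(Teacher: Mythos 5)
Your proof is correct for the theorem as stated, and it rests on the same two pillars as the paper's proof --- the Breeden--Litzenberger strike differentiation (the ``vertical'' constraint) and put--call parity (the ``horizontal'' one) --- but it decomposes the argument differently, and the differences cut both ways. The paper does \emph{not} presuppose that puts and calls are priced off a common measure: its Lemma 1 introduces two a priori distinct measures $\mu_1$ (for calls) and $\mu_2$ (for puts), each pinned down by differentiation across strikes; its Lemma 2 then differentiates the parity relation in $K$ to force $\mu_1(A_K)=\mu_2(A_K)$ for all $K$, hence $\mu_1=\mu_2$ on all Borel sets; and its Lemma 3 pins the common mean to the forward. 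You instead posit a single candidate $\mu$ pricing both payoffs from the outset, so the content of the paper's Lemma 2 is assumed rather than proved --- legitimate under the theorem's wording (``a measure that prices puts and calls''), but it forgoes the stronger fact the paper establishes, namely that parity alone forces the put measure and the call measure to coincide even when one refuses to assume it. In exchange, your route is cleaner and more rigorous precisely where the paper is loose: you get the mean constraint $E_\mu[S_t]=F$ by directly integrating $(S_t-K)^+-(K-S_t)^+=S_t-K$, bypassing the paper's Lemma 3, whose displayed identity ``$f_C-f_P=S_0+K$'' is a typo (it should be $S_t-K$) and whose Radon--Nikodym conclusion is essentially hand-waving; and you justify $\partial_K C=-e^{-r(t-t_0)}\,\mu(A_K^c)$ via the Tonelli identity $\int_{A_K^c}(S_t-K)\,\mathrm{d}\mu=\int_K^\infty \mu(A_s^c)\,\mathrm{d}s$, the fundamental theorem of calculus, and right-continuity of $s\mapsto\mu(A_s^c)$, which handles atoms and infinite variance, whereas the paper differentiates heuristically and appeals to semi-closed intervals generating the Borel $\sigma$-algebra. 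Your closing checks (convexity of $C$ in $K$ giving nonnegativity, the $K\to 0$ boundary giving total mass one) address existence of a bona fide probability measure, a point the paper leaves implicit; you also correctly preserve the paper's nuance that the pinned mean is the tradable forward $F_t^P$, which need not be the arbitrage forward $F_t^Q$.
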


This measure can be risk-neutral in the sense that it prices the forward $F_t^Q$, but does not have to be and imparts rate of return to the stock embedded in the forward.

\begin{lemma}
For a given maturity T, there exist two measures $\mu_1$ and $\mu_2$ for European calls and puts of the same maturity and same underlying security associated with the valuation by expectation of terminal payoff, which are unique such that, for any call and put of strike K, we have:
\begin{equation}
 C= \int_ \Omega f_C \, \mathrm{d}\mu_1 \,  ,\label{callequation}
\end{equation}
and 
\begin{equation}
  P= \int_ \Omega f_P \, \mathrm{d}\mu_2 \, ,
\end{equation}
respectively, and where $f_C$ and $f_P$ are $(S_t-K)^+$ and $(K-S_t)^+$ respectively.

\end{lemma}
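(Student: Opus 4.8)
The plan is to prove the claim for the call (the put being entirely symmetric) and to treat existence and uniqueness separately, with uniqueness carrying the real content. Writing the target representation as $C(K)=\int_\Omega (S_t-K)^+\,\mathrm{d}\mu_1=\int_{A_K^c}(S_t-K)\,\mathrm{d}\mu_1$, I would first observe that, since $K\mapsto(s-K)^+$ is convex for each fixed $s$, the price curve $K\mapsto C(K)$ is automatically convex. This convexity is exactly the no-arbitrage content of a butterfly spread, and, crucially, it requires neither smoothness of $C$ nor a finite second moment: a convex function has finite left and right derivatives at every interior point, which is all the regularity the argument will use.

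Next I would recover $\mu_1$ from these one-sided derivatives rather than from a density, so that atoms and heavy (infinite-variance) tails are handled uniformly. For $0\le K_1<K_2$ the increment $C(K_1)-C(K_2)=\int_\Omega\left[(S_t-K_1)^+-(S_t-K_2)^+\right]\mathrm{d}\mu_1$ has a bounded integrand equal to $K_2-K_1$ on $A_{K_2}^c$, to $S_t-K_1$ on $A_{K_2}\setminus A_{K_1}$, and to $0$ on $A_{K_1}$. Dividing by $K_2-K_1$, letting $K_2\downarrow K_1$ (respectively $K_1\uparrow K_2$), and using dominated convergence identifies the right and left derivatives as $\partial_K^+ C(K)=-\mu_1((K,\infty))$ and $\partial_K^- C(K)=-\mu_1([K,\infty))$. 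Hence the call curve pins down $\mu_1$ on every half-line $(K,\infty)$, while the derivative jump $\partial_K^+C(K)-\partial_K^-C(K)=\mu_1(\{K\})$ recovers every atom.

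For uniqueness I would then invoke a Dynkin ($\pi$--$\lambda$) argument: the family $\{A_K^c:K\ge 0\}=\{(K,\infty):K\ge 0\}$ is closed under finite intersection and generates the Borel $\sigma$-algebra on $\Omega$, so two probability measures that agree on it must coincide; since the previous step shows $C$ determines $\mu_1$ there, and $\int_\Omega \mathrm{d}\mu_1=1$ fixes the mass, $\mu_1$ is unique. Existence comes either directly from the hypothesis that prices are expectations of terminal payoff, or constructively by taking $\mu_1$ to be the second (distributional) derivative $\partial_K^2 C$ in the sense of Breeden and Litzenberger (1978): convexity gives $\partial_K^2 C\ge 0$, the boundary behavior $C(K)\to 0$ as $K\to\infty$ together with $C(0)=\mathbb{E}_{\mu_1}[S_t]$ fixes the total mass, and integrating by parts returns the representation. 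The same scheme applied to $P(K)=\int_{A_K}(K-S_t)\,\mathrm{d}\mu_2$ yields $\partial_K^+ P(K)=\mu_2(A_K)$, i.e.\ the one-sided derivative is the distribution function of $\mu_2$, and the $\pi$-system $\{A_K:K\ge 0\}$ delivers uniqueness of $\mu_2$ in the identical way.

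The step I expect to be the main obstacle is regularity: the argument must never covertly assume that $\mu_1$ possesses a density or that $C$ is twice differentiable, since the paper's whole purpose is to admit infinite-variance (hence possibly non-smooth) laws and to permit atoms. I would meet this precisely by working with the increments $C(K_1)-C(K_2)$ and the one-sided derivatives guaranteed by convexity, which sidesteps differentiation under the integral sign altogether and keeps every step valid under the sole standing assumption of a finite first moment (needed only to ensure $\int_{A_K^c}(S_t-K)\,\mathrm{d}\mu_1<\infty$). Note finally that this lemma does not assert $\mu_1=\mu_2$; that identification, forced by put-call parity through the common forward, is reserved for the theorem.
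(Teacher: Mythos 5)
Your proof is correct and follows the same essential route as the paper: both recover the measure from strike-derivatives of the option price (the Breeden--Litzenberger call-spread idea), identify $-\partial_K C$ with the tail mass $\mu_1(A_K^c)$, and conclude uniqueness because the half-lines $A_K^c$ generate the Borel $\sigma$-algebra on $\Omega$. Where you genuinely go further is regularity. The paper forms the spread $\frac{1}{\Delta K}\left(C(K)-C(K+\Delta K)\right)$ and passes to the limit $\Delta K \to 0$, writing $\frac{\partial C}{\partial K}$ as though a two-sided derivative existed; you instead derive the needed one-sided derivatives from convexity of $K \mapsto C(K)$, compute them by dominated convergence on increments, and thereby also capture atoms of $\mu_1$ through the jump $\partial_K^+ C - \partial_K^- C = \mu_1(\{K\})$, a case the paper's argument silently excludes. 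Your Dynkin $\pi$--$\lambda$ step is exactly the paper's closing assertion that ``semi-closed intervals generate the Borel $\sigma$-algebra,'' but stated with the hypotheses made explicit (agreement on a $\pi$-system plus total mass one). The paper's version buys brevity and stays close to the trading interpretation of a call spread converging to a digital; yours buys validity under the paper's own advertised generality --- laws with only a finite first moment, where the price curve can have kinks and the distribution point masses. Two small remarks: your constructive existence aside (taking $\mu_1 = \partial_K^2 C$ distributionally) needs extra care at $K=0$ if $S_t$ has an atom there, though existence is in any case hypothesized by the lemma's phrasing, as you note; and your observation that put-call parity plays no role in this lemma is accurate --- the paper writes the parity equations inside this proof, but they are only used in the next lemma, which identifies $\mu_1$ with $\mu_2$.
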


\begin{proof}
For clarity, set $r$ and $\delta$ to $0$ without a loss of generality.
By Put-Call Parity Arbitrage, a positive holding of a call ("long") and negative one of a put ("short") replicates a tradable forward; because of P/L variations, using positive sign for long and negative sign for short:

\begin{equation}
C(S_{t_0},K,t)-P(S_{t_0},K,t)+K=F_t^P \label{putcallparity}
\end{equation}
necessarily since $F_t^P$ is tradable. 

Put-Call Parity holds for all strikes, so:
\begin{equation}
C(S_{t_0},K +\Delta K,t)-P(S_{t_0},K+\Delta K,t)+K+\Delta K=F_t^P\label{diffputcallparity}
\end{equation}
for all $K \in \Omega $

Now a Call spread in quantities $\frac{1}{\Delta K}$, expressed as \[C(S_{t_0},K,t)-C(S_{t_0},K+\Delta K,t),\]delivers \$1 if $S_t > K+\Delta K$ (that is, corresponds to the indicator function $\mathbf{1}_{S > K+\Delta K}$), 0 if $S_t\leq K$ (or $\mathbf{1}_{S > K}$), and the quantity times $S_t-K$ if $K < S_t \leq K+\Delta K$, that is, between 0 and \$1 (see Breeden and Litzenberger, 1978).  Likewise, consider the converse argument for a put, with $\Delta K <S_t$.

At the limit, for $\Delta K \to  0$
\begin{equation}
 \frac{\partial{C(S_{t_0},K,t)}}{\partial{K}} =- P(S_t > K) 
 =- \int_{A_K^c} \mathrm{d}\mu_1 .
\end{equation}
By the same argument:
\begin{equation}
 \frac{\partial{P(S_{t_0},K,t)}}{\partial{K}} = \int_{A_K} \mathrm{d}\mu_2  =  1-\int_{A_K^c} \mathrm{d}\mu_2 .
\end{equation}

As semi-closed intervals generate the whole Borel $\sigma$-algebra on $\Omega$, this shows that $\mu_1$and $\mu_2$ are unique.

\end{proof}
\begin{lemma}
The probability measures of puts and calls are the same, namely  for each Borel set $A$ in $\Omega$, $\mu_1(A)$ = $\mu_2(A)$. 
\end{lemma}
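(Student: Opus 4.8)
The plan is to differentiate the Put-Call Parity relation with respect to the strike and feed the result into the two derivative identities already obtained in Lemma 1. First I would take \eqref{putcallparity}, namely $C(S_{t_0},K,t)-P(S_{t_0},K,t)+K=F_t^P$, and observe that for a fixed maturity the forward price $F_t^P$ carries no dependence on $K$. Differentiating both sides in $K$ then yields
\begin{equation}
\frac{\partial C}{\partial K}-\frac{\partial P}{\partial K}+1=0.
\end{equation}

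Next I would insert the expressions established in the preceding lemma, $\frac{\partial C}{\partial K}=-\int_{A_K^c}\mathrm{d}\mu_1$ and $\frac{\partial P}{\partial K}=1-\int_{A_K^c}\mathrm{d}\mu_2$. Once the constant $+1$ cancels against the $-1$ carried by $\frac{\partial P}{\partial K}$, the surviving identity is
\begin{equation}
\int_{A_K^c}\mathrm{d}\mu_1=\int_{A_K^c}\mathrm{d}\mu_2,
\end{equation}
that is, $\mu_1(A_K^c)=\mu_2(A_K^c)$ for every $K\in\Omega$. Thus the two candidate measures agree on every right half-line $(K,\infty)$.

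It then remains to promote agreement on half-lines to agreement on the whole Borel $\sigma$-algebra. The family $\{A_K^c=(K,\infty):K\geq 0\}$ is closed under finite intersection, since $(K_1,\infty)\cap(K_2,\infty)=(\max\{K_1,K_2\},\infty)$, and it generates the Borel sets on $\Omega$ --- precisely the generating property already invoked at the close of Lemma 1. Because $\mu_1$ and $\mu_2$ are probability measures that coincide on this generating $\pi$-system, the standard uniqueness-of-measures theorem (Dynkin) forces $\mu_1(A)=\mu_2(A)$ for every Borel set $A$, which is the claim.

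The one point I expect to require care is the legitimacy of differentiating $C$ and $P$ in the strike, i.e.\ that the distributional derivatives used above genuinely recover the measures. Under the standing finite-first-moment hypothesis this is routine: the survival function $K\mapsto\mu_i(A_K^c)$ is monotone, hence differentiable almost everywhere, and the half-line identity extends to all $K$ by right-continuity of the measures. Apart from this, the argument is a single substitution followed by the generating-class step, so I do not anticipate a substantive obstacle.
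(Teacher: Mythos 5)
Your proposal is correct and follows essentially the same route as the paper: differentiate Put-Call Parity in $K$ (the paper does this by combining the parity relations at strikes $K$ and $K+\Delta K$ and letting $\Delta K \to 0$, which is the same computation), substitute the derivative identities from the preceding lemma to get $\int_{A_K^c}\mathrm{d}\mu_1=\int_{A_K^c}\mathrm{d}\mu_2$ for all $K$, and then extend to all Borel sets. Your explicit invocation of the $\pi$-system/Dynkin uniqueness theorem simply makes rigorous the step the paper states as ``this equality being true for any semi-closed interval, it extends to any Borel set.''
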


\begin{proof}
Combining Equations \ref{putcallparity} and \ref{diffputcallparity}, dividing by $\frac{1}{\Delta K}$ and taking $\Delta K \to  0$:

\begin{equation}
-\frac{\partial{C(S_{t_0},K,t)}}{\partial{K}}+\frac{\partial{P(S_{t_0},K,t)}}{\partial{K}}=1
\end{equation}
for all values of $K$, so

\begin{equation}
\int_{A_K^c} \mathrm{d}\mu_1=\int_{A_K^c} \mathrm{d}\mu_2 ,
\end{equation}
hence $\mu_1(A_K)=\mu_2(A_K)$ for all $K \in [0,\infty)$. This equality being true for any semi-closed interval, it extends to any Borel set.

 \begin{equation*}
\qedhere
 \end{equation*}
\end{proof}

\begin{lemma}
Puts and calls are required, by static arbitrage, to be evaluated at same as risk-neutral measure $\mu_Q$ as the tradable forward.
\end{lemma}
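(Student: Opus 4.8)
The plan is to combine the two preceding lemmas with the pointwise algebraic structure of the payoffs. By Lemma 2 the call and put measures coincide, so write $\mu \equiv \mu_1 = \mu_2$ for this common measure. The goal is to show that the expectation of $S_t$ under $\mu$ equals the tradable forward $F_t^P$, so that $\mu$ prices the forward; in Case 1, where the forward is set by cash-and-carry arbitrage, $F_t^P = F_t^Q$ and $\mu$ is therefore the risk-neutral measure $\mu_Q$.

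First I would record the pointwise identity $(S_t - K)^+ - (K - S_t)^+ = S_t - K$, valid for every value of $S_t$ and $K$. Subtracting the put representation from the call representation of Lemma 1 and applying this identity gives
\begin{equation}
C - P = \int_\Omega \left( (S_t-K)^+ - (K-S_t)^+ \right) \mathrm{d}\mu = \int_\Omega (S_t - K)\, \mathrm{d}\mu .
\end{equation}
Because $\mu$ is a probability measure with $\int_\Omega \mathrm{d}\mu = 1$, and because the finite first moment hypothesis lets the integral split, this equals $\int_\Omega S_t\, \mathrm{d}\mu - K$.

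Next I would invoke Put-Call Parity, Equation \ref{putcallparity}, which states $C - P + K = F_t^P$, i.e. $C - P = F_t^P - K$. Equating the two expressions for $C - P$ and cancelling $K$ yields
\begin{equation}
\int_\Omega S_t\, \mathrm{d}\mu = F_t^P ,
\end{equation}
so the mean of the terminal price under the unique common measure is pinned to the forward. When the forward is the arbitrage forward $F_t^Q$ of Case 1 this reads $\int_\Omega S_t\,\mathrm{d}\mu = F_t^Q$, identifying $\mu$ with the risk-neutral measure $\mu_Q$; when the forward instead embeds a risk premium $m$, the same measure merely has its mean shifted accordingly, which delivers the broader uniqueness asserted in Theorem 1.

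I do not expect a genuine obstacle in the derivation itself; the one point requiring care is the finite-first-moment assumption, which is exactly what licenses splitting $\int_\Omega (S_t - K)\,\mathrm{d}\mu$ into $\int_\Omega S_t\,\mathrm{d}\mu - K$ and guarantees the forward expectation is well defined without any appeal to a second moment. The real subtlety is conceptual rather than technical: \emph{risk-neutrality is a property of the forward, not of the option machinery}. The construction forces the put and call measures to share the forward's mean regardless of whether that mean happens to be arbitrage-determined, which is precisely why one obtains a unique pricing measure even in the absence of the risk-neutral argument.
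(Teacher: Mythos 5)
Your proof is correct, and it rests on the same raw ingredients as the paper's: the expectation representations of Lemma 1, the identification $\mu_1=\mu_2$ from Lemma 2, and Put-Call Parity (Equation \ref{putcallparity}). Where you differ is in how the conclusion is extracted, and your execution is the sounder one. The paper writes the parity constraint as $\int_\Omega f_C\,\mathrm{d}\mu_1-\int_\Omega f_P\,\mathrm{d}\mu_1=\int_\Omega F_t\,\mathrm{d}\mu_Q-K$ and then ``takes derivatives on both sides'' in $K$ to land on the Radon--Nikodym statement $\mathrm{d}\mu_Q/\mathrm{d}\mu_1=1$; but differentiating that display in $K$ is vacuous, since by Lemma 1 the left side's strike-derivative is $-\int_{A_K^c}\mathrm{d}\mu_1-\int_{A_K}\mathrm{d}\mu_1=-1$, which matches the $-1$ from the $-K$ term on the right identically. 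The content is carried by the \emph{undifferentiated} equation, which is exactly what you use: the pointwise identity $f_C-f_P=(S_t-K)^+-(K-S_t)^+=S_t-K$ (note the paper misprints this as $S_0+K$), integrated against the common measure $\mu$ and split via the finite-first-moment assumption, gives $\int_\Omega S_t\,\mathrm{d}\mu=F_t^P$. That pins the mean of the option-pricing measure to the tradable forward; since $\mu_Q$ is characterized only by pricing the forward, while the option prices across strikes determine the measure uniquely (Lemma 1), the identification $\mu=\mu_Q$ follows, with genuine risk-neutrality holding exactly when the forward is the arbitrage forward of Case 1. Your closing remark --- that risk-neutrality is a property of the forward, not of the option machinery, and that only the first moment is ever needed --- is precisely the paper's thesis, stated more carefully than in the paper's own proof.
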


\begin{proof}
\begin{equation}
F_t^P=\int_\Omega F_t \, \mathrm{d}\mu_Q;
\end{equation}

from Equation \ref{putcallparity}

\begin{equation}
\int_\Omega f_C(K) \,\mathrm{d}\mu_1- \int_\Omega f_P(K)\,\mathrm{d}\mu_1 = \int_\Omega F_t \, \mathrm{d}\mu_Q -K
\end{equation}

Taking derivatives on both sides, and since $f_C-f_P=S_0+K$, we get the Radon-Nikodym derivative:
\begin{equation}
\frac{\mathrm{d}\mu_Q}{\mathrm{d}\mu_1}=1
\end{equation}
for all values of K.
 
\end{proof}

\[\qedhere\]

\section{Case where the Forward is not risk neutral}
Consider the case where  $F_t$ is observable, tradable, and use it solely as an underlying security with dynamics on its own. In such a case we can completely ignore the dynamics of the nominal underlying $S$, or use a non-risk neutral "implied" rate linking cash to forward, $m^*= \frac{\log \left(\frac{F}{S_0}\right)}{t-t_0}$. the rate $m$ can embed risk premium, difficulties in financing, structural or regulatory impediments to borrowing, with no effect on the final result.

In that situation, it can be shown that the exact same results as before apply, by remplacing the measure $\mu_Q$ by another measure $\mu_{Q^*}$. Option prices remain unique \footnote{We assumed 0 discount rate for the proofs; in case of nonzero rate, premia are discounted at the rate of the arbitrage operator}.

\section{comment}
We have replaced the complexity and intractability of dynamic hedging with a simple, more benign interpolation problem, and explained the performance of pre-Black-Scholes option operators using simple heuristics and rules, bypassing the structure of the theorems of financial economics.

Options can remain non-redundant and markets incomplete: we are just arguing here for a form of arbitrage pricing (which includes risk-neutral pricing at the level of the expectation of the probability measure), nothing more. But this is sufficient for us to use any probability distribution with finite first moment, which includes the Lognormal, which recovers Black Scholes.

A final comparison. In dynamic heding, missing a single hedge, or encountering a single gap (a tail event) can be disastrous ---as we mentioned, it requires a series of assumptions beyond the mathematical, in addition to severe and highly unrealistic constraints on the mathematical. Under the class of fat tailed distributions, increasing the frequency of the hedges does not guarantee reduction of risk. Further, the standard dynamic hedging argument requires the exact specification of the \textit{risk-neutral} stochastic process between $t_0$ and $t$, something econometrically unwieldy, and which is generally reverse engineered from the price of options, as an arbitrage-oriented interpolation tool rather than as a representation of the process. 


Here, in our Put-Call Parity based methodology, our ability to track the risk neutral distribution is guaranteed by adding strike prices, and since probabilities add up to 1, the degrees of freedom that the recovered measure $\mu_Q$ has in the gap area between a strike price \(K\) and the next strike up, \(K +\Delta K\), are severely reduced, since the measure in the interval is constrained by the difference $\int_{A_K}^c \mathrm{d}\mu - \int_{A_{K+\Delta K}}^c \mathrm{d}\mu $. In other words, no single gap between strikes can significantly affect the probability measure, even less the first moment, unlike with dynamic hedging. In fact it is no different from standard kernel smoothing methods for statistical samples, but applied to the distribution across strikes.\footnote{For methods of interpolation of implied probability distribution between strikes, see Avellaneda et al.(1997).}

The assumption about the presence of strike prices constitutes a natural condition: conditional on having a \textit{practical} discussion about options, options strikes need to exist. Further, as it is the experience of the author, market-makers can add over-the-counter strikes at will, should they need to do so.

%
%
%
%
%

\section*{Acknowledgment}
Peter Carr, Marco Avellaneda, H\'elyette Geman, Raphael Douady, Gur Huberman, Espen Haug, and Hossein Kazemi.
\section*{References}

\noindent
Avellaneda, M., Friedman, C., Holmes, R., \& Samperi, D. (1997). Calibrating volatility surfaces via relative-entropy minimization. Applied Mathematical Finance, 4(1), 37-64.

\smallskip
\noindent
Black, F., Scholes, M. (1973). The pricing of options and corporate liabilities. Journal of Political Economy 81, 637-654.

\smallskip
\noindent
Breeden, D. T., \& Litzenberger, R. H. (1978). Prices of state-contingent claims implicit in option prices. Journal of business, 621-651.

\smallskip
\noindent
Carr, P. and Madan, D. (2001). Optimal positioning in derivative securities, Quantitative Finance, pp. 19-37.

\smallskip
\noindent
Derman, E. and Taleb, N. (2005). The illusions of dynamic replication. Quantitative Finance, 5(4):323-326.

\smallskip
\noindent
Dupire, Bruno, 1994, Pricing with a smile, Risk 7, 18-20.

\smallskip
\noindent
Green, R. C., \& Jarrow, R. A. (1987). Spanning and completeness in markets with contingent claims. Journal of Economic Theory, 41(1), 202-210.

\smallskip
\noindent
Harrison, J. M., \& Kreps, D. M. (1979). Martingales and arbitrage in multiperiod securities markets. Journal of Economic theory, 20(3), 381-408.

\smallskip
\noindent
Haug, E. G. and Taleb, N. N. (2010) Option Traders use Heuristics, Never the Formula known as Black-Scholes-Merton Formula, Journal of Economic Behavior and Organizations, pp. 97–106. 

\smallskip
\noindent
Keynes, J.M., 1924. A Tract on Monetary Reform. Reprinted in 2000. Prometheus Books, Amherst New York.

\smallskip
\noindent
Merton, R.C., 1973. Theory of rational option pricing. Bell Journal of Economics and Management Science 4, 141-183.

\smallskip
\noindent
Nachman, D. C. (1988). Spanning and completeness with options. Review of Financial Studies, 1(3), 311-328.

\smallskip
\noindent
Ruffino, D., \& Treussard, J. (2006). Derman and Taleb's "The illusions of dynamic replication": a comment. Quantitative Finance, 6(5), 365-367.

\smallskip
\noindent
Thorp, E.O., 1973. A corrected derivation of the Black-Scholes option model. In:  CRSP proceedings, 1976.

\end{document}